\documentclass[11pt,twosided]{article}
\usepackage[sans]{dsfont}

\usepackage[british]{babel}
\usepackage{amsmath,amssymb}
\usepackage{amsfonts}
\usepackage{amsthm}
\usepackage[nospace]{cite}

\usepackage{mathrsfs}
\usepackage[mathscr]{euscript}
\usepackage{mathbbol}
\theoremstyle{plain}
\newtheorem{theorem}{Theorem}
\theoremstyle{definition}

\newtheorem{assumption}{Assumption}
\theoremstyle{remark}
\newtheorem{remark}{Remark}
\newcommand{\Tr}{\operatorname{Tr}}
\newcommand{\rmd}{\mathrm{d}}
\newcommand{\rmi}{\mathrm{i}}
\newcommand{\rme}{\mathrm{e}}
\newcommand{\openone}{\mathds1}
\newcommand{\norm}[1]{\left\Vert#1\right\Vert}
\newcommand{\abs}[1]{\left\vert#1\right\vert}

\newcommand{\Cbb}{\mathbb{C}}
\newcommand{\Pbb}{\mathbb{P}}
\newcommand{\Qbb}{\mathbb{Q}}

\newcommand{\Kcal}{\mathcal{K}}
\newcommand{\Lcal}{\mathcal{L}}

\newcommand{\Pcal}{\mathcal{P}}
\newcommand{\Qcal}{\mathcal{Q}}
 \newcommand{\Rcal}{\mathcal{R}}

\newcommand{\Vcal}{\mathcal{V}}

\newcommand{\Fscr}{\mathscr{F}}

\newcommand{\Hscr}{\mathscr{H}}

\newcommand{\Lscr}{\mathscr{L}}

\newcommand{\Sscr}{\mathscr{S}}
\newcommand{\Tscr}{\mathscr{T}}

 \newcommand{\EE}{\operatorname{\mathbb{E}}}

\newcommand{\RE}{\operatorname{Re}}

\title{STOCHASTIC SCHR\"ODINGER EQUATIONS AND MEMORY}

\author{A. BARCHIELLI, \\
Politecnico di Milano, Department of Mathematics\\
Also: Istituto Nazionale di Fisica Nucleare, Sezione di Milano.
\\ \\
P. DI TELLA,
\\
Friedrich-Schiller-Universit\"{a}t Jena, Institut f\"{u}r Stochastik.
\\ \\
C. PELLEGRINI,
\\
Universit\'e Paul Sabatier, Laboratoire de Statistique et Probabilit\'es.
\\ \\
F. PETRUCCIONE,
\\ University of KwaZulu-Natal,
\\
School of Physics and National Institute for Theoretical Physics.}

\begin{document}
\maketitle
\begin{abstract}
By starting from the stochastic Schr\"odinger equation and quantum trajectory
theory, we introduce memory effects by considering stochastic adapted coefficients.
As an example of a natural non-Markovian extension of the theory of white noise
quantum trajectories we use an Ornstein-Uhlenbeck coloured noise as the output
driving process. Under certain conditions a random Hamiltonian evolution is
recovered. Moreover, we show that our non-Markovian stochastic Schr\"odinger
equations unravel some master equations with memory kernels.
\end{abstract}

\noindent Keywords: {Stochastic Schr\"odinger equation; Non Markovian quantum master
equation; Unravelling; Quantum trajectories; Memory kernels}

\noindent PACS: 02.50.Ey, 03.65.Ta, 05.40.-a, 42.50.Lc

\section{Introduction}

The problem of how to describe the reduced dynamics of a quantum open system
interacting with an environment is very important \cite{Alicki,Book,GardinerZoller}.
More and more applications demand to treat dissipative effects, tendency to
equilibrium, decoherence,... or how to have more equilibrium states, survival of
coherences and entanglement,... in spite of the interaction with the external
environment. The open system dynamics is often described in terms of \textit{quantum
master equations} which give the time evolution of the density matrix of the small
system. When the Markov approximation is good (no memory effects) the situation is
well understood: if the generator of the dynamics has the ``Lindblad structure'',
then the dynamics sends statistical operators into statistical operators and it is
completely positive \cite{lindblad1,GK}.

However, for many new applications the Markovian approximation is not applicable.
Such a situation appears in several concrete physical models: strong coupled
systems, entanglement and correlation in the initial state, finite reservoirs...
This gives rise to the theory of \textit{non-Markovian quantum dynamics}, for which
does not exist a general theory, but many different approaches \cite{Diosi4,%
Diosi3,Diosi2,francesco1,Breuer3,GW4,GW5,GW2,Budini1,%
Budini3,Budini2,francesco3,Reb,Breuer1,Diosi1,GW3,VacB09,ClementFrancesco}.

Non Markovian reduced dynamics are usually obtained from the total dynamics of
system plus bath by projection operator techniques such as Nakajima-Zwanzig operator
technique, time-convolutionless operator technique \cite{Book,Breuer3}, correlated
projection operator or Lindblad rate equations \cite{Breuer1,Budini2}... These
techniques yield in principle exact master equations for the evolution of the
subsystem. For example Nakajima-Zwanzig technique gives rise to an
integro-differential equation with a memory kernel involving a retarded time
integration over the history of the small system. However, in most of the cases the
exact evolutions remain of formal interest: no analytic expression of the solution,
impossible to simulate... Usually, some approximations have to be used to obtain a
manageable description. But as soon as an approximation is done, the resulting
equation can violate the complete positivity property; let us stress that the
complete positivity (and even positivity) is a major question in non-Markovian
systems \cite{Reb,VacB09}.

The easiest way to preserve complete positivity is to introduce approximate or
phenomenological equations at the Hilbert space level, an approach which is useful
also for numerical simulations. We can say that in this way one is developing a
non-Markovian theory of \emph{unravelling} and of ``Quantum Monte Carlo methods''
\cite{Book,francesco1,CarmichaelBook,Diosi4,Diosi3,Diosi2,GW5}. In the Markovian
case such an approach is related to the so called stochastic Schr\"odinger equation,
quantum trajectory theory, measurements in continuous time. It provides wide
applications for optical quantum systems and description of experiments such as
\emph{photo-detection} or \emph{heterodyne/homodyne detection}
\cite{CarmichaelBook,Bar86PR,Bel88,BarB91,Wis96,Mil96,Hol01,BarGreg09}. In the
non-Markovian case, an active line of research concentrates on a similar
interpretation of non-Markovian unravelling. In this context, the question is more
involved (for example, when complete positivity is violated the answer is hopeless)
and remains an open problem. For the usual scheme of indirect quantum measurement it
has been shown that in general such an interpretation is not accessible
\cite{GW3,ClementFrancesco}. Actually, only few positive answers for very special
cases have been found and this question is still highly debated
\cite{GW3,GW5,GW2,Diosi1}.

Our aim is to introduce memory at Hilbert space level, in order to guarantee at the
end a completely positive dynamics, and to maintain the possibility of the
measurement interpretation. Our approach is based on the introduction of stochastic
coefficients depending on the past history and on the use of coloured noises
\cite{BarH95,DiT,BarPP}.

In Sect.\ \ref{sec2} we introduce a special case of stochastic Schr\"odinger
equation with memory. The starting point is a generalisation of the usual theory of
the linear stochastic Schr\"odinger equation \cite{BarGreg09,DiT,BarH95}, based upon
the introduction of random coefficients. This approach introduces memory effects in
the underlying dynamics. The main interest is that the complete positivity is
preserved and a measurement interpretation can be developed. We present this theory
only in the context of the diffusive stochastic Schr\"odinger equation.

In Sect.\ \ref{sec:3} we attach the problem of memory by introducing an example of
\emph{coloured bath} and we show that we obtain a model of random Hamiltonian
evolution \cite{BarPP}, while we remain in the general framework of Sect.\
\ref{sec2}.

Finally, in Sect.\ \ref{sec:NZ}, by using Nakajima-Zwanzig projection techniques, we
show that the mean states satisfy closed master equations with memory kernels, which
automatically preserve complete positivity. Moreover, we can say that the stochastic
Schr\"odinger equations of the previous sections are unravellings of these memory
master equations.

\section{A non Markovian stochastic Schr\"odinger equation}\label{sec2}
The linear stochastic Schr\"odinger equation (lSSE) is the starting point to
construct unravelling of master equations and models of measurements in continuous
time \cite{BarB91,BarGreg09}. By introducing random coefficients in such equation,
but maintaining its structure, we get memory in the dynamical equations, while
complete positivity of the dynamical maps and the continuous measurement
interpretation are preserved \cite{BarH95,DiT}. To simplify the theory we consider
only diffusive contributions and bounded operators.

\begin{assumption}[The linear stochastic Schr\"odinger equation]\label{ass:lSSE}
Let $\Hscr$ be a complex, separable Hilbert space, the space of the quantum system,
and $\big(\Omega,\Fscr$, $(\Fscr_t),\mathbb{Q}\big)$ be a stochastic basis
satisfying the usual hypotheses, where a $d$-dimen\-sional continuous Wiener process
is defined; $\Qbb$ will play the role of a reference probability measure. The lSSE
we consider is
\begin{gather}\label{eq:lSSE}
\rmd \psi(t)= K(t) \psi(t)\rmd t +\sum_{j=1}^d R_j(t)\psi(t)\rmd W_j(t),
\\ \notag
\psi(0)=\psi_0\in L^2(\Omega,\Fscr_0,\Qbb;\Hscr).
\end{gather}
\end{assumption}

Let us denote by $\Tscr(\Hscr)$ the trace class on $\Hscr$, by $\Sscr(\Hscr)$ the
subset of the statistical operators and by $\Lscr(\Hscr)$ the space of the linear
bounded operators.

\begin{assumption}[The random coefficients] \label{ass:coeff}
The coefficient in the drift has the structure
\begin{equation}
K(t)=-\rmi H(t)-
\frac 1 2 \sum_{j=1}^d
R_j(t)^*R_j(t).
\end{equation}
The coefficients $H(t)$, $R_j(t)$ are random bounded operators with $H(t)=H(t)^*$,
say predictable c\`agl\`ad processes in $\big(\Omega, \Fscr, (\Fscr_t), \Qbb\big)$.

Moreover,  $\forall T
> 0$, we have
\begin{subequations}
\begin{gather}\label{ass:H}
\int_0^T\EE_\Qbb\left[\left\| H(t)\right\|\right]\rmd t <+\infty , \\ \label{ass:R}
\EE_\Qbb\biggl[\exp\biggl\{2\sum_{j=1}^d\int_0^T\left\| R_j(t)\right\|^2\rmd t
\biggr\}\biggr]<+\infty.
\end{gather}
\end{subequations}
\end{assumption}

\begin{theorem} \label{theor1}
Under Assumptions \ref{ass:lSSE}, \ref{ass:coeff}, the lSSE \eqref{eq:lSSE} has a
pathwise unique solution. The square norm $\norm{\psi(t)}^2$ is a continuous
positive martingale given by
\begin{equation}\label{normpsi}
\norm{\psi(t)}^2=\norm{\psi_0}^2 \exp\biggl\{\sum_j\biggl[\int_0^t m_j(s) \rmd W_j(s) - \frac 1 2
\int_0^t  m_j(s)^2\, \rmd s\biggr]\biggr\},
\end{equation}
\begin{equation}\label{mt}
m_j(t):=2\RE \left\langle \hat\psi(t)\big|R_j(t)\hat\psi(t)\right\rangle,
\end{equation}
\begin{equation}\label{hat_psi}
\hat\psi(t):= \begin{cases} \psi(t)/\norm{\psi(t)}, & \mathrm{if} \ \norm{\psi(t)}\neq 0,
\\
v \ \mathrm{(fixed\ unit\ vector)} ,
& \mathrm{if} \ \norm{\psi(t)}= 0.\end{cases}
\end{equation}
\end{theorem}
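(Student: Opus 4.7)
The proof splits naturally into three tasks: existence/uniqueness, computing $d\|\psi(t)\|^2$, and promoting the resulting exponential local martingale to a true martingale.

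For existence and uniqueness, my plan is to localise. Define stopping times $\tau_n:=\inf\bigl\{t\ge 0\colon \int_0^t\|H(s)\|\rmd s+\sum_j\int_0^t\|R_j(s)\|^2\rmd s>n\bigr\}\wedge n$; Assumption \ref{ass:coeff} ensures $\tau_n\uparrow\infty$ $\Qbb$-a.s.\ and that on $[0,\tau_n]$ the predictable operator-valued coefficients $K(t)$ and $R_j(t)$ are uniformly bounded. Then the standard Picard iteration for linear SDEs in the Hilbert space $\Hscr$ applies: the map $\psi\mapsto \psi_0+\int_0^{\cdot}K(s)\psi(s)\rmd s+\sum_j\int_0^{\cdot}R_j(s)\psi(s)\rmd W_j(s)$ is a contraction on $L^2(\Omega\times[0,\tau_n];\Hscr)$ after choosing the interval short enough, and iterating over finitely many such pieces yields a unique continuous adapted solution on $[0,\tau_n]$. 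Pathwise uniqueness then lets the solutions be glued on $\bigcup_n[0,\tau_n]$ to give a global one.

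To obtain the closed form of $\norm{\psi(t)}^2$, I apply Itô's formula in $\Hscr$ to the real function $\psi\mapsto\langle\psi|\psi\rangle$. The drift contribution equals $\bigl\langle\psi\big|(K(t)+K(t)^*)\psi\bigr\rangle\rmd t$ plus the quadratic variation correction $\sum_j\langle R_j(t)\psi|R_j(t)\psi\rangle\rmd t$. The very choice of the form of $K$ in Assumption \ref{ass:coeff} forces $K+K^*=-\sum_j R_j^*R_j$, so the two drift terms cancel exactly, leaving
\begin{equation*}
\rmd\norm{\psi(t)}^2=\sum_j 2\RE\bigl\langle\psi(t)\big|R_j(t)\psi(t)\bigr\rangle\rmd W_j(t)=\norm{\psi(t)}^2\sum_j m_j(t)\rmd W_j(t),
\end{equation*}
where in the second equality I use the definition \eqref{mt}--\eqref{hat_psi} (the value of $\hat\psi$ on $\{\norm{\psi}=0\}$ is immaterial since the prefactor vanishes there). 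Doléans-Dade's exponential formula then yields \eqref{normpsi} directly, and positivity and continuity are evident from that representation.

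Finally, to upgrade $\norm{\psi(t)}^2$ from a continuous positive local martingale to a true martingale on every $[0,T]$, I verify Novikov's criterion for the exponent $\sum_j\int_0^{\cdot}m_j(s)\rmd W_j(s)$. Since $\abs{m_j(t)}\le 2\norm{R_j(t)}$ pointwise from \eqref{mt} and $\norm{\hat\psi(t)}\le 1$, one has
\begin{equation*}
\EE_\Qbb\biggl[\exp\biggl\{\tfrac12\sum_j\int_0^T m_j(s)^2\rmd s\biggr\}\biggr]\le \EE_\Qbb\biggl[\exp\biggl\{2\sum_j\int_0^T\norm{R_j(s)}^2\rmd s\biggr\}\biggr]<+\infty
\end{equation*}
by Assumption \ref{ass:R}. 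Novikov's theorem then guarantees that the exponential on the right-hand side of \eqref{normpsi} is a true $\Qbb$-martingale.

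I expect the main technical obstacle to be the existence/uniqueness step in the Hilbert space setting with random, operator-valued, only càglàd coefficients: one must be careful that the localisation preserves predictability and that the Picard contraction is set up in the correct norm so that the linear growth bound really controls both the Bochner and the Itô integrals uniformly on $[0,\tau_n]$. The rest (the cancellation in the drift and the Novikov verification) is essentially dictated by the algebraic form of $K$ and by the exponential moment bound \eqref{ass:R}, which appears to have been chosen precisely to make Novikov's criterion immediate.
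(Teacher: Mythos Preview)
Your argument is correct and matches the paper's approach in substance: the paper's own proof is a two-line appeal to \cite[Proposition~2.1 and Theorem~2.4]{BarH95}, together with the observation that condition~\eqref{ass:R} yields Novikov's criterion in place of one of the hypotheses there. Your three steps (localisation plus Picard for existence/uniqueness, the It\^o computation showing the drift cancels because $K+K^*=-\sum_jR_j^*R_j$, and Novikov via $\abs{m_j}\le 2\norm{R_j}$) are exactly what that reference contains, so you have simply unpacked the citation.

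One small imprecision: after localising with your $\tau_n$, you say the coefficients $K(t)$ and $R_j(t)$ are \emph{uniformly bounded} on $[0,\tau_n]$. That is not what $\tau_n$ gives you---it only bounds the time integrals $\int_0^{\tau_n}\norm{H(s)}\rmd s$ and $\sum_j\int_0^{\tau_n}\norm{R_j(s)}^2\rmd s$ by~$n$, while the pointwise operator norms may still be unbounded. This does not break the argument, since the Picard contraction estimate for a linear SDE only needs the integrated Lipschitz bound, not a uniform pointwise one; just phrase the contraction step accordingly.
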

\begin{proof}
Assumptions \ref{ass:lSSE}, \ref{ass:coeff} imply the Hypotheses of
\cite[Proposition 2.1 and Theorem 2.4]{BarH95}, but Hypothesis 2.3.A of page 295.
According to the discussion at the end of p. 297, this last hypothesis can be
substituted by \eqref{ass:R}, which implies Novikov condition, a sufficient
condition for an exponential supermartingale to be a martingale. Then, all the
statements hold.
\end{proof}

\begin{remark}\label{rem1}
By expression \eqref{normpsi} we get that on the set $\{\norm{\psi_0}>0\}$ we have
$\norm{\psi(t)}>0$ $\Qbb$-a.s. This means that, if $\psi_0\neq 0$ $\Qbb$-a.s., then
the process $\hat\psi(t)$ \eqref{hat_psi} is almost surely defined by the
normalisation of $\psi(t)$ and the arbitrary vector $v$ does not play any role with
probability one.
\end{remark}

\begin{remark} \label{rem2}
Let us define the positive, $\Tscr(\Hscr)$-valued process
\begin{equation}\label{sigmat}
\sigma(t):=|\psi(t)\rangle \langle \psi(t)|.
\end{equation}

By applying the It\^o formula to $ \langle \psi(t)|a\psi(t)\rangle$,
$a\in\Lscr(\Hscr)$, we get the weak-sense linear stochastic master equation (lSME)
\begin{equation}\label{SME}
\rmd \sigma(t)= \mathcal{L}(t)[\sigma(t)]\rmd t + \sum_{j=1}^d \Rcal_j(t)
[\sigma(t)] \rmd W_j(t),
\end{equation}
\begin{equation}\label{Rcal}
\Rcal_j(t)[\rho]:=R_j(t)\rho+ \rho R_j(t)^*,
\end{equation}
\begin{equation}\label{rLop}
\mathcal{L}(t)[\rho]=  -\rmi [H(t),\rho] + \sum_{j=1}^d \left(
R_j(t)\rho R_j(t)^* - \frac 1 2 \left\{R_j(t)^* R_j(t) , \rho \right\} \right);
\end{equation}
$\Lcal(t)$ is the random Liouville operator \cite[Proposition 3.4]{BarH95}.
\end{remark}

\begin{assumption}[The initial condition] \label{ass:incond} Let us assume that the initial
condition $\psi_0$ is normalised, in the sense that
$\EE_\Qbb\left[\norm{\psi_0}^2\right]=1$. Then, $\varrho_0 :=\EE_\Qbb\left[|
\psi_0\rangle\langle\psi_0|\right]\in \Sscr(\Hscr)$ represents the initial
statistical operator.
\end{assumption}

\begin{remark}\label{rem3}
Under the previous assumptions $p(t):=\norm{\psi(t)}^2$ is a positive, mean-one
martingale and, $\forall T>0$, we can define the new probability law on $(\Omega,
\Fscr_T)$
\begin{equation}
\forall F\in\Fscr_T \qquad \Pbb^T(F):=\EE_\Qbb[p(T)1_F].
\end{equation}
By the martingale property these new probabilities are consistent in the sense that,
for $0\leq s <t$ and $F\in \Fscr_s$, we have $\Pbb^t(F)=\Pbb^s(F)$.
\end{remark}

The new probabilities are interpreted as the physical ones, the law of the output of
the time continuous measurement. Let us stress that it is possible to express the
physical probabilities in agreement with the axiomatic formulation of quantum
mechanics by introducing positive operator valued measures and completely positive
instruments \cite{BarH95,DiT}.

\begin{remark} \label{rem:Gir} By Girsanov theorem, the $d$-dimensional process
\begin{equation}\label{newW}
\widehat W_j(t):= W_j(t) -\int_0^t m_j(s)\,\rmd s, \qquad j=1,\ldots,d, \quad t\in[0,T],
\end{equation}
is a standard Wiener process under the physical probability $\Pbb^T$
\cite[Proposition 2.5, Remark 2.6]{BarH95}.
\end{remark}

By adding further sufficient conditions two more important equations can be
obtained.

\begin{assumption}[\!{\cite[Hypotheses 2.3.A]{BarH95}}]\label{ass:lSSE1}
Let us assume that we have
\begin{equation}
\sup_{\omega\in\Omega}\,\int_0^t\Bigl\|\sum_{j=1}^d R_j(s,\omega)^\ast
R_j(s,\omega)\Bigr\|\rmd s <+\infty .
\end{equation}
\end{assumption}

\begin{theorem} \label{theor2}
Let Assumptions \ref{ass:lSSE}--\ref{ass:lSSE1} hold. Under the physical probability
the normalized state $\hat\psi(t)$, introduced in Eq.\ \eqref{hat_psi}, satisfies
the non-linear \emph{stochastic Schr\"odinger equation} (SSE)
\begin{multline}\label{eq:nlSSE}
\rmd \hat\psi(t) = \sum_{j}\left[ R_j(t)-\RE  n_j\left(t,\hat \psi(t)\right)\right]
\hat\psi(t)\, \rmd  \widehat W_j (t) + K(t) \hat\psi(t)\,\rmd  t \\ {}+
\sum_j\biggl[ \left(\RE n_j\left(t,\hat \psi(t)\right)\right) R_j(t)- \frac 1 2
\left(\RE n_j\left(t,\hat \psi(t)\right)\right)^2\biggr] \hat\psi(t)\,\rmd t \,,
\end{multline}
where $ n_j(t,x):=\langle x|R_j(t)x\rangle$, $\forall t\in[0,+\infty)$, $j=1,
\ldots, d$, $ x\in\Hscr$.

Moreover, the process (\emph{a priori or average states}) defined by
\begin{equation}\label{etat}
\eta(t):=\EE_\Qbb[\sigma(t)], \quad \text{or} \quad \Tr \left\{a\eta(t)\right\}=\EE_\Qbb \left[
\langle \psi(t)|a\psi(t)\rangle \right] \quad \forall a \in \Lscr(\Hscr)
\end{equation}
satisfies the \emph{master equation}
\begin{equation}\label{eq:mas.eq}
\eta(t)=\varrho_0+\int_0^t\EE_\Qbb\big[\mathcal{L}(s)[\sigma(s)]\big]\rmd s.
\end{equation}
\end{theorem}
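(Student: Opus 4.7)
The proof splits naturally into two independent derivations: the nonlinear SSE for $\hat\psi$ under $\Pbb^T$, obtained by Itô calculus combined with the Girsanov change of measure of Remark \ref{rem:Gir}; and the master equation for $\eta$, obtained by taking the $\Qbb$-expectation of the integrated form of the lSME \eqref{SME}.

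For the nonlinear SSE, I would set $q(t):=\norm{\psi(t)}^{-1}=p(t)^{-1/2}$, which is $\Qbb$-a.s.\ well-defined by Remark \ref{rem1}. From \eqref{normpsi}, $p(t)$ is a pure martingale with $\rmd p=p\sum_j m_j\,\rmd W_j$, so Itô applied to $x\mapsto x^{-1/2}$ yields an explicit formula for $\rmd q$ in terms of the $m_j$'s. The Itô product rule $\rmd\hat\psi=q\,\rmd\psi+\psi\,\rmd q+\rmd[\psi,q]$, with cross-variation $\rmd[\psi,q]=-\tfrac12\, q\sum_j m_j R_j\psi\,\rmd t$, produces the dynamics of $\hat\psi$ under $\Qbb$. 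Substituting $\rmd W_j=\rmd\widehat W_j+m_j\,\rmd t$ and rewriting $m_j=2\RE n_j(t,\hat\psi)$ regroups the several $\rmd t$-contributions so that the noise coefficient collapses to $R_j-\RE n_j(t,\hat\psi)$ and the extra drift reduces exactly to $(\RE n_j)R_j-\tfrac12(\RE n_j)^2$, matching \eqref{eq:nlSSE}.

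For the master equation, I would read \eqref{SME} in integrated form as
\begin{equation*}
\sigma(t)=|\psi_0\rangle\langle\psi_0|+\int_0^t\Lcal(s)[\sigma(s)]\,\rmd s+\sum_j\int_0^t\Rcal_j(s)[\sigma(s)]\,\rmd W_j(s),
\end{equation*}
and take $\EE_\Qbb$ of both sides. Fubini applied to the first integral delivers \eqref{eq:mas.eq}, provided each stochastic integral has vanishing mean. This is precisely where Assumption \ref{ass:lSSE1} enters: since $\norm{\Rcal_j(s)[\sigma(s)]}_1\leq 2\norm{R_j(s)}\,\norm{\psi(s)}^2$ and $\EE_\Qbb[\norm{\psi(s)}^2]=1$ by the martingale property established in Theorem \ref{theor1}, the uniform bound on $\sum_j R_j(s)^*R_j(s)$ promotes the stochastic integral from a local to a true $\Qbb$-martingale, so its expectation is zero.

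The main obstacle is the bookkeeping in the first step: the several $\rmd t$-terms generated by the Itô correction of $q$, the product-rule for $\psi\,q$, the cross-variation, and the Girsanov drift must conspire to yield precisely the nonlinear coefficients of \eqref{eq:nlSSE}, with the $\tfrac38 m_j^2$ piece from Itô on $x^{-1/2}$ combining with the $-\tfrac12 m_j^2$ piece from Girsanov to give the $-\tfrac12(\RE n_j)^2$ correction. The Girsanov substitution itself is immediate, since the Novikov condition was already verified in the proof of Theorem \ref{theor1} through \eqref{ass:R}, and the master-equation part is routine once Assumption \ref{ass:lSSE1} supplies the required integrability.
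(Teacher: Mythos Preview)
Your proposal is correct and, in mathematical content, coincides with what the paper does: the paper simply verifies that the hypotheses of \cite[Theorem 2.7]{BarH95} and \cite[Proposition 3.2]{BarH95} are fulfilled and defers to those results, whereas you carry out explicitly the It\^o/Girsanov computation for $\hat\psi$ and the zero-mean argument for the stochastic integral in the lSME that those cited results encapsulate. Your bookkeeping is right (in particular the $\tfrac38 m_j^2-\tfrac12 m_j^2=-\tfrac18 m_j^2=-\tfrac12(\RE n_j)^2$ combination), and your identification of Assumption~\ref{ass:lSSE1} as the ingredient that upgrades the stochastic integral to a true $\Qbb$-martingale is exactly the role it plays in \cite{BarH95}.
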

\begin{proof}
One can check that all the hypotheses of \cite[Theorem 2.7]{BarH95} hold. Then, the
SSE for $\hat \psi(t) $ follows.

As in the proof of \cite[Propositions 3.2]{BarH95}, one can prove that the
stochastic integral in the lSME \eqref{SME} has zero mean value. Then, Eq.\
\eqref{eq:mas.eq} follows.
\end{proof}

Note that, by the definition of the physical probabilities, we have also
\begin{equation}
\Tr \left\{a\eta(t)\right\}=\EE_{\Pbb^T} \left[
\langle \hat\psi(t)|a\hat\psi(t)\rangle \right], \quad \forall a \in \Lscr(\Hscr), \quad
\forall t,\,T : 0\leq t\leq T.
\end{equation}
The SSE \eqref{eq:nlSSE} is the starting point for numerical simulations; the key
point is that norm of its solution $\hat\psi(t)$ is constantly equal to one. We
underline that Eq.\ \eqref{eq:mas.eq} is not a closed equation for the mean state of
the system. In the last section we shall see how to obtain, in principle, a closed
equation for the \emph{a priori} states of the quantum system.

\subsection*{The finite dimensional case}
If we assume a finite dimensional Hilbert space and we strengthen the conditions on
the coefficients, we obtain a more rich theory. We discuss here below the situation
\cite{DiT}.

\begin{assumption}\label{ass:fin.dim}
\begin{enumerate}
\item The Hilbert space of the quantum system is finite dimensional, say
    $\Hscr:=\Cbb^n$. We write $M_n(\Cbb)$ for the space of the linear operators
    on $\Hscr$ into itself ($n\times n$ complex matrices).
\item The coefficient processes $R_j$ and $H$ are $M_n(\Cbb)$-valued and
    progressive with respect to the reference filtration.
\item The coefficients satisfy the following conditions: for every $T>0$ there
    exist two positive constants $M(T)$ and $L(T)$ such that
\begin{equation}
\begin{split}
&\sup_{\omega\in\Omega}\sup_{t\in[0,T]}\left\|\sum_{j=1}^d R_j(t,\omega)^\ast
R_j(t,\omega)\right\|
\leq L(T) <\infty ,
\\
&\sup_{\omega\in\Omega}\sup_{t\in[0,T]}\left\|H(t,\omega)\right\|\leq M(T) <\infty.
\end{split}
\end{equation}
\end{enumerate}
\end{assumption}

Under Assumption \ref{ass:fin.dim} it is possible to prove existence and pathwise
uniqueness of the solution of the lSSE just modifying classical results for
existence and uniqueness of the solution for stochastic differential equation with
deterministic coefficients.

Moreover, it is possible to prove that the solution of lSSE fulfil some $L^p$
estimate: in this point the finite dimension of the Hilbert space plays a
fundamental role because the bounds we obtain for the process $\psi(t)$ involve
constants depending on $n$.

Obviously, in this context the martingale property of the norm of the solution is
still valid and so one can define the consistent family of physical probabilities.
It is also possible to introduce the propagator of the lSSE, that is the two times
$M_n(\Cbb)$-valued stochastic process $A(t,s)$ such that $A(t,s)\psi(s)=\psi(t)$,
for all $t,s\geq0$ s.t. $s\leq t$. We are able to obtain a stochastic differential
equation (with pathwise unique solution) for the propagator and, by means of it, to
prove that the propagator takes almost surely values in the space of the invertible
matrices. The $L^p$ estimates for $\psi(t)$ are useful to obtain $L^p$ estimates on
$A(t,s)$. Furthermore, the propagator satisfies the typical composition law of an
evolution: $A(t,s)=A(t,r)A(r,s)$ for all $t,r,s\geq0$ s.t. $s\leq r\leq t$.

The almost sure invertibility of the propagator guarantees that the process
$\hat\psi(t)$ can be almost surely defined and that this process satisfies, under
the physical probabilities, a non linear SSE, similar to Eq.\ \eqref{eq:nlSSE}. It
is possible to prove that in this case the SSE has a pathwise unique solution.

When we go on extending the theory to the space of the statistical operators, we can
take as initial condition a random statistical operator or a deterministic one. We
define the process $\sigma(t)$ as in Eq.\ \eqref{sigmat} and, by using the It\^o
formula, we obtain an equation formally similar to the lSME, but in this case we are
able to prove the uniqueness of its solution given the initial statistical operator
$\varrho_0$ (the existence comes out by construction). In this way we can say that
the lSME is the evolution equation of the quantum system, when the initial condition
is a deterministic (or even random) statistical operator $\varrho_0$. We can
introduce the propagator of the lSME, which is a two times-linear map valued
stochastic process, say $\Lambda(t,s)$, such that
$\Lambda(t,0)[\varrho_0]=\sigma(t)$, $\Lambda(t,s)=\Lambda(t,r)\circ\Lambda(r,s)$,
for all $t,r,s\geq0$ s.t. $s\leq r\leq t$ and $\Lambda(t,s)[\tau]=A(t,s)\tau
A(t,s)^*$. From the last expression of the propagator of the lSME, it comes out that
this is a completely positive map valued process.

Also in this case we can introduce a consistent family of physical probabilities.
Indeed, the process $\Tr\{\sigma(t)\}$ is an exponential mean-one martingale that
can be used to define the new probability laws, as we did in the Hilbert space.

It is then possible to define the normalisation of $\sigma(t)$ with respect to its
trace,
\[\varrho(t)=\frac{\sigma(t)}{\Tr\{\sigma(t)\}}\]
and, under the physical probabilities, we have the following non linear equation for
$\varrho(t)$, with pathwise unique solution
\begin{equation}\label{eq:nlSME}
\begin{cases}
\displaystyle\rmd \varrho(t)=\Lcal(t)[\varrho(t)]\rmd t +\sum_{j=1}^d \left\{\Rcal_j(t)[\varrho(t)]-
 v_j(t)\varrho(t)\right\}\rmd \widehat W_j(t)\,,&\quad t\geq0
\\
\varrho(0)=\varrho_0\,,
\end{cases}
\end{equation}
where $v_j(t):=\Tr\{(R_j(t)+R_j(t)^*)\varrho(t)\}$, and $\widehat W(t)$ is a Wiener
process under the physical probabilities defined by
\[\widehat W_j(t):=W_j(t)-\int_0^tv_j(s)\rmd s\,,\quad\forall j=1,\ldots,d\,.\]

\section{Random Hamiltonian}\label{sec:3}
In the previous section, we have presented a non Markovian generalisation of the
usual diffusive lSSE by using random coefficients to introduce memory. In this
section we adopt an alternative strategy and we start with a usual lSSE with non
random coefficients, but driven by a coloured noise; in this way the memory is
encoded in the driving noise of the lSSE, not in the coefficients. As we shall see,
this model too turns out to be a particular case of the general theory presented in
Section \ref{sec2}. Moreover, the new lSSE will be norm-preserving and will
represent a quantum system evolving under a random Hamiltonian dynamics, while the
Hamiltonian is very singular and produces dissipation.

As our aim is just to explore some possibility, we keep things simple and we
consider a one-dimensional driving noise and two non-random, bounded operators $A$
and $B$ on $\Hscr$ in the drift and in the diffusive terms. The starting point is
then the basic linear stochastic Schr\"odinger equation
\begin{equation}\label{LIN1}
\rmd\psi(t)=A\psi(t)\rmd t+B\psi(t)\rmd X(t).
\end{equation}

The simplest choice of a coloured noise is the stationary Ornstein-Uhlen\-beck
process defined by
\[
X(t) = \rme^{-\gamma t} Z+\int_0^t \rme^{-\gamma (t-s)} \rmd W(s), \qquad \gamma>0.
\]
where $(W(t))$ is a one dimensional Wiener process, defined on the stochastic basis
$(\Omega,\Fscr,\Fscr_t,\mathbb{Q})$ and $Z$ is an $\Fscr_0$-measurable, normal
random variable with mean 0 and variance $1/(2\gamma)$. The Ornstein-Uhlenbeck
process $(X(t))$ is a Gaussian process with zero mean and correlation function
\begin{equation}
\EE_\Qbb [X(t)X(s)]=\frac{\rme^{-\gamma\abs{t-s}}}{2\gamma}\,;
\end{equation}
it satisfies the stochastic differential equation
\begin{equation}\label{eqor}
\rmd X(t)= -\gamma X(t) \rmd t +
\rmd W(t), \qquad X(0)=Z.
\end{equation}
Formally, Eq.\ \eqref{LIN1} is driven by the derivative of the Ornstein-Uhlenbeck
process (heuristically, $\rmd X(t)=\dot X(t)\rmd t$), whose two-time correlation is
no more a delta, as in the case of white noise, but it is formally given by
$\EE_\mathbb{Q}[\dot X(t)\dot X(s)]=\delta(t-s) - \frac \gamma 2
\,\rme^{-\gamma\vert t-s\vert}$. Note that the Markovian regime is recovered in the
limit $\gamma \downarrow 0$.

It is then straightforward that Eq. (\ref{LIN1}) can be rewritten in the form
\begin{equation}\label{lin2}
\rmd \psi(t)=\bigl(A-\gamma X(t)B\bigr)\psi(t)\rmd t+ B\psi(t)\rmd W(t),
\end{equation}
on $(\Omega,\Fscr,\Fscr_t,\mathbb{Q})$. The initial condition is assumed to satisfy
Assumption \ref{ass:incond}. Assumption \ref{ass:lSSE} is satisfied with $d=1$,
$K(t)=A-\gamma X(t)B$, $R(t)=B$.

The key point of the construction of Section \ref{sec2} and of its interpretation is
the fact that $\norm{\psi(t)}^2$ is a martingale. To this end we compute its
stochastic differential by using the It\^o rules and we get
\begin{multline}\label{diffnormsq}
\rmd\langle\psi(t)|\psi(t)\rangle=\langle
\rmd\psi(t)|\psi(t)\rangle+\langle\psi(t)|\rmd\psi(t)\rangle+
\langle \rmd\psi(t)|\rmd\psi(t)\rangle \\
{}=\langle\psi(t)|\left[A^*+A-\gamma X(t)\left(B^*+B\right)+ B^*
B\right]\psi(t)\rangle\rmd t \\
{}+\langle\psi(t)|\big( B^*+ B\big)\psi(t)\rangle\rmd W(t).
\end{multline}
Then, the process $(\Vert\psi(t)\Vert^2)$ can be a martingale only if the term in
front of $\rmd t$ (the drift term) is equal to zero. This imposes that
\begin{equation}\label{cond1}
A^*+A-\gamma X(t)\left(B^*+B\right)+B^* B=0,\qquad \forall t.
\end{equation}
By taking the mean of this equation we get $A^*+A+B^* B=0$; then, we need also
$B^*+B$. These conditions impose that there are two self-adjoint operators $L$ and
$H_0$ such that $B=-\rmi L$ and $A=-\rmi H_0-\frac{1}{2}L^2$. As a consequence the
initial equation \eqref{LIN1} becomes
\begin{equation}\label{liin3}
\rmd\psi(t)=\left[-\rmi\left(H_0-\gamma X(t)L\right)-\frac{1}{2}\,L^2\right]\psi(t) \rmd t
-\rmi L\psi(t)\rmd W(t)\,.
\end{equation}

Now, being $X(t)$ a continuous adapted process and $H_0={H_0}^*\in \Lscr(\Hscr)$,
$L=L^*\in \Lscr(\Hscr)$, also Assumption \ref{ass:coeff} holds with
\[
H(t)=H_0-\gamma X(t)L, \quad R(t)=-\rmi L, \quad K(t)=-\rmi\bigl(H_0-\gamma
X(t)L\bigr)-\frac{1}{2}\,L^2.
\]
Moreover, we have $\EE_\Qbb[\abs{X(t)}]\leq \sqrt{\EE_\Qbb[X(t)^2]}\leq
1/\sqrt{2\gamma}$, \[ \EE_\Qbb[\norm{H(t)}]\leq \norm{H_0}+\gamma
\norm{L}\EE_\Qbb[\abs{X(t)}]\leq \norm{H_0}+\sqrt{\frac\gamma 2} \norm{L}, \] which
implies condition \eqref{ass:H}. Condition \eqref{ass:R} and Assumption
\ref{ass:lSSE1} are trivially satisfied because $R(t)$ is non random, time
independent and bounded.

As all Assumptions \ref{ass:lSSE}--\ref{ass:lSSE1} hold, also all statements of
Theorems \ref{theor1}, \ref{theor2} and Remarks \ref{rem1}--\ref{rem:Gir} hold. In
particular the lSSE \eqref{liin3} has a pathwise unique solution.

What is peculiar of the present model is that Eqs.\
\eqref{diffnormsq}--\eqref{liin3} give $\norm{\psi(t)}^2$ $=\norm{\psi(0)}^2$ or
that the probability densities are independent of time, $p(t)=p(0)$, cf.\ Eqs.\
\eqref{mt} and \eqref{newW}, which give $m(t)=0$ and $\widehat W(t)=W(t)$. We have
also, from Eq.\ \eqref{hat_psi}, $\hat\psi(t)= \psi(t)/\norm{\psi(0)}$, if
$\norm{\psi(0)}\neq 0$, and, from Remark \ref{rem3}, $\Pbb^t(F)=\Qbb(F)$, for all
events $F\in \Fscr_t$, independent of $\Fscr_0$. As a consequence the change of
probability has no effect (the new probability is equal to the initial for events
independent of $\Fscr_0$). In other terms, no information has been extracted from
the measurement interpretation.

Moreover, the property $\|\psi(t)\|=\|\psi_0\|$ is in agreement with a purely
Hamiltonian evolution. More precisely, let
$\displaystyle\mathop{\mathrm{T}}^\leftarrow\exp\{\cdots\}$ denotes the time ordered
exponential; then, the formal solution of Eq.\ \eqref{liin3} is given by
\[
\psi(t)=\mathop{\mathrm{T}}^\leftarrow\exp\left\{-\rmi \int_0^t\left(H_0-\gamma X(s)L\right) \rmd s
-\rmi \int_0^t L\,\rmd W(s) \right\}
\psi_0.
\]
The evolution of the quantum system is then completely determined by the
time-dependent, random Hamiltonian \[ \hat{H}_t=H_0+\left( \dot W(t)-\gamma X(t)
\right) L. \]
Let us stress that it is a formal expression, due to the presence of
$\dot W(t)$.

This shows that the usual measurement interpretation of \eqref{LIN1}
\textit{coloured} with an Ornstein-Uhlenbeck process gives raise to a random
Hamiltonian evolution. As announced, we recover the framework of the evolution of a
closed system incorporating a random environment characterised in terms of an
Ornstein-Uhlenbeck noise.

One can investigate the evolution of the corresponding density matrices. To this
end, we consider the pure state process $(\sigma(t))$ defined by $
\sigma(t)=|\psi(t)\rangle \langle \psi(t)|$, Eq.\ \eqref{sigmat}. By using It\^o
rules, the process $(\sigma(t))$ satisfies the stochastic differential equation
(SDE)
\begin{equation}\label{eqrho1}
\rmd \sigma(t)= -\rmi [H_0-\gamma X(t)L,\sigma(t)]\rmd t -\rmi  [L,\sigma(t)]\rmd W(t) -\frac {1}
2 \,[L,[L,\sigma(t)]]\rmd t,
\end{equation}
which is, of course, equivalent to \eqref{liin3} with random Liouville operator
\eqref{rLop} given by \[ \Lcal(t)= -\rmi [H_0-\gamma X(t)L,\cdot]\rmd t  -\frac {1}
2 \,[L,[L,\cdot]]\rmd t. \]
Let us stress that the presence of the Ornstein-Uhlenbeck
process implies that the solution $(\sigma(t))$ of Eq.\ \eqref{eqrho1} is not a
Markov process.

Taking the expectation, we get the evolution of the mean
$\eta(t)=\EE_\Qbb[\sigma(t)]$, Eq.\ \eqref{etat}, which turns out to be
\begin{equation}\label{meqmem}
\frac{\rmd\ }{\rmd t}\, \eta(t)= -\rmi [H_0,\eta(t)] -\frac {1} 2 \,[L,[L,\eta(t)]]
+\rmi \gamma\big[L,\EE_\Qbb[X(t)\sigma(t)]\big].
\end{equation}
Note that it is not a closed master equation for the mean state $\eta(t)$. Actually,
we have derived a model with memory for the mean state. Indeed, the term $\rmi\gamma
\big[L,\EE_\Qbb[X(t)\sigma(t)]\big]$ introduces non-Markovian memory effects in the
dynamics. Moreover, Eq.\ \eqref{liin3} is an unravelling of the master equation
\eqref{meqmem}.


\section{Projection techniques and closed master equations with memory}\label{sec:NZ}

As we have seen in Eq.\ \eqref{eq:mas.eq}, the \emph{a priori} states or average
states $\eta(t)=\EE_\Qbb[\sigma(t)]=\EE_{\Pbb^t}[\rho(t)]$ satisfy the equation
$\dot \eta(t)=\EE_\Qbb \big[ \Lcal(t)[\sigma(t)]\big]$, which is not closed because
both $\Lcal(t)$ and $\sigma(t)$ are random. However, at least heuristically, some
kind of generalised master equations can be obtained by using the Nakajima-Zwanzig
projection technique \cite[Section 9.1.2]{Book}.

Let us introduce the projection operators on the relevant part (the mean) and on the
non relevant one:
\[
\Pcal[\cdots]:= \EE_\Qbb[\cdots],\qquad \Qcal:=\openone - \Pcal.
\]
Then, we have $\eta(t)=\Pcal[\sigma(t)]$ and we define the non relevant part of the
state, the mean Liouville operator and the difference from the mean of the Liouville
operator
\[
\sigma_\bot(t):= \Qcal[\sigma(t)]=\sigma(t)-\eta(t), \]
\[ \Lcal_{\mathrm{M}}(t):= \EE_\Qbb[\Lcal(t)],
\qquad \Delta \Lscr(t):=\Lscr(t)-\Lcal_{\mathrm{M}}(t).
\]
By using the projection operators and the fact that the stochastic integrals have
zero mean, which means $\Pcal \int_0^t \rmd W_j(s) \cdots=0$, from \eqref{SME} we
get the system of equations
\begin{subequations}
\begin{equation}\label{doteta1}
\dot \eta(t)= \Lcal_{\mathrm{M}}(t)[\eta(t)]+
\Pcal\circ\Delta\mathcal{L}(t)[\sigma_\bot(t)],
\end{equation}
\begin{multline}\label{eq:sigmabot}
\rmd \sigma_\bot(t)= \Qcal\circ\mathcal{L}(t)[\sigma_\bot(t)] \rmd t + \sum_{j=1}^d
\Rcal_j(t)  [\sigma_\bot(t)] \rmd W_j(t)
\\ {} +
\Qcal\circ\mathcal{L}(t) [\eta(t)] \rmd t + \sum_{j=1}^d \Rcal_j(t) [\eta(t)] \rmd
W_j(t).
\end{multline}
\end{subequations}

As one can check by using It\^o formula, the formal solution of Eq.\
\eqref{eq:sigmabot} can be written as
\begin{multline}\label{bot2}
\sigma_\bot(t)= \Qcal\circ \Vcal(t,0)[\sigma_\bot(0)]+ \int_0^t \Qcal\circ
\Vcal(t,s)\circ\biggl( \mathcal{L}(s) -\sum_j \Rcal_j(s)^2\biggr) [\eta(s)] \rmd s
\\ {} +
\Qcal\circ \Vcal(t,0)\biggl[\sum_{j=1}^d  \int_0^t \Vcal(s,0)^{-1}\circ \Rcal_j(s)
[\eta(s)] \rmd W_j(s)\biggr],
\end{multline}
where $\Vcal(t,r)$ is the fundamental solution (or propagator) of the lSME
\eqref{SME} and satisfies the SDE
\[
\Vcal(t,r)=\openone+ \int_r^t\rmd s\, \Lcal(s)  \circ\Vcal(s,r)
+\sum_{j=1}^d \int_r^t \rmd W_j(s)\,\Rcal_j(s)\circ \Vcal(s,r).
\]
Let us stress that if one includes $\Vcal(t,0)$ into the stochastic integrals, from
one side one gets the simpler expression $\Vcal(t,0)\circ
\Vcal(s,0)^{-1}=\Vcal(t,s)$. But the propagator is a stochastic process and could be
non adapted. To overcome this difficulty one should use some definition of
anticipating stochastic integral. So, we prefer the formulation with $\Vcal(t,0)$
outside the stochastic integral, in order to have only adapted integrands.

By introducing the quantity \eqref{bot2} into Eq.\ \eqref{doteta1}, we get the
generalised master equation for the \emph{a priori} states
\begin{multline}\label{MME}
\dot \eta(t) = J(t)+ \Lcal_{\mathrm{M}}(t)[\eta(t)]+ \int_0^t
\Kcal(t,s)[\eta(s)]\rmd s
\\ {} +
\EE_\Qbb\biggl[\Delta\Lcal(t)\circ\Qcal\circ \Vcal(t,0)\biggl[\sum_{j=1}^d \int_0^t
\Vcal(s,0)^{-1}\circ \Rcal_j(s) [\eta(s)] \rmd W_j(s)\biggr]\biggr],
\end{multline}
where
\[
J(t):=\EE_\Qbb\left[\Delta\Lcal(t)\circ\Qcal\circ
\Vcal(t,0)[\sigma_\bot(0)]\right]
\]
is an inhomogeneous term which disappears if the initial state is non random, i.e.
$\sigma_\bot(0)=0$, and
\[
\Kcal(t,s):=\EE_\Qbb\bigg[\Delta\Lcal(t)\circ \Qcal\circ\Vcal(t,s)\circ\biggl(
\mathcal{L}(s) -\sum_j \Rcal_j(s)^2\biggr)\bigg]
\]
is an integral memory kernel. Also the last term in \eqref{MME} is a memory
contribution, linear in $\eta$ and depending on its whole trajectory up to $t$.

Let us stress that to compute the terms appearing in Eq.\ \eqref{MME} and to solve
it is not simpler than to solve Eq.\ \eqref{SME} and to compute the mean of the
solution. The meaning of Eq.\ \eqref{MME} is theoretical: it is a quantum master
equation with memory and Eq.\ \eqref{eq:nlSSE} gives an unravelling of it.

While the best way to study a concrete model is to simulate the stochastic
Schr\"odinger equation, Eq.\ \eqref{MME} could be the starting point for some
approximation. A possibility is to approximate $\Vcal(t,r)$ by the deterministic
evolution generated by the mean Liouville operator:
\[
\Vcal_{\mathrm{M}}(t,r)=\openone+ \int_r^t\rmd s\, \Lcal_{\mathrm{M}}(s)  \circ\Vcal_{\mathrm{M}}(s,r).
\]
If we take also $\sigma_\bot(0)=0$, we get
\begin{multline}\label{MMEx}
\dot \eta(t) \simeq \Lcal_{\mathrm{M}}(t)[\eta(t)]+ \int_0^t
\Kcal_1(t,s)[\eta(s)]\rmd s
\\ {} +
\EE_\Qbb\biggl[\Delta\Lcal(t)\biggl[\sum_{j=1}^d \int_0^t
\Vcal_{\mathrm{M}}(t,s)\circ \Rcal_j(s) [\eta(s)] \rmd W_j(s)\biggr]\biggr],
\end{multline}
\[
\Kcal_1(t,s):=\EE_\Qbb\bigg[\Delta\Lcal(t)\circ\Vcal_{\mathrm{M}}(t,s)\circ\biggl(
\Delta\mathcal{L}(s) -\sum_j \Delta\Rcal_j^2(s)\biggr)\bigg],
\]
where $\Delta\Rcal_j^2(s)=\Rcal_j(s)^2-\EE_\Qbb[\Rcal_j(s)^2]$.

For the model of the previous section we have: $\Rcal_j(s)=\Rcal=-\rmi [L,\cdot]$,
\[
\Lcal_{\mathrm{M}}(t)=\Lcal_{\mathrm{M}}=-\rmi [H_0, \cdot]-\frac 1 2 [L,[L, \cdot]], \qquad
\Vcal_{\mathrm{M}}(t,s)=\rme^{\Lcal_{\mathrm{M}}\left(t-s\right)},
\]
\begin{multline*}
\EE_\Qbb\biggl[\Delta\Lcal(t)\biggl[\sum_{j=1}^d \int_0^t
\Vcal_{\mathrm{M}}(t,s)\circ \Rcal_j(s) [\eta(s)] \rmd W_j(s)\biggr]\biggr]
\\ {}=
-\gamma \EE_\Qbb \bigg[X(t)\int_0^t\rmd W(s) \, \Rcal\circ
\rme^{\Lcal_{\mathrm{M}}\left(t-s\right)} \circ \Rcal[\eta(s)]\bigg]
\\ {}=
-\gamma \int_0^t\rmd s \, \Rcal\circ \rme^{\left(\Lcal_{\mathrm{M}}-
\gamma\right)\left(t-s\right)} \circ \Rcal[\eta(s)],
\end{multline*}
\[
\Kcal_1(t,s)=\gamma^2\EE_\Qbb[X(t)X(s)]\Rcal\circ
\rme^{\Lcal_{\mathrm{M}}\left(t-s\right)} \circ \Rcal  =\frac \gamma 2 \,
\Rcal\circ \rme^{\left(\Lcal_{\mathrm{M}}-\gamma\right)\left(t-s\right)} \circ
\Rcal.
\]
Finally, the approximation of the non Markovian master equation turns out to be
\begin{multline}
\dot \eta(t) \simeq -\rmi [H_0, \eta(t)]-\frac 1 2 [L,[L, \eta(t)]]
\\ {}
+ \frac \gamma 2 \int_0^t \rmd s \left[ L , \rme^{\left(\Lcal_{\mathrm{M}}-
\gamma\right)\left(t-s\right)}\bigr[ \left[L,\eta(s)\right]\bigr]\right].
\end{multline}
However, we have no results on the positivity preserving properties of such an
approximate evolution equation, while the unravelling of the complete equation
guarantees complete positivity and feasibility of numerical simulations.

\section*{Acknowledgments}

PDT thanks his PhD advisor, Prof.\ Dr.\ Hans-Jurgen Engelbert, and acknowledges the
financial support of the Marie Curie Initial Training Network (ITN),
FP7-PEOPLE-2007-1-1-ITN, no.213841-2, Deterministic and Stochastic Controlled
Systems and Applications.

CP acknowledges the financial support of the ANR ``Hamiltonian and Markovian
Approach of Statistical Quantum Physics'' (A.N.R. BLANC no ANR-09-BLAN-0098-01).

\end{document}